\documentclass[conference,12pt]{article} 

\usepackage{amsmath}
\usepackage{amssymb}
\usepackage{algorithm}
\usepackage{amsthm}
\usepackage{graphicx}
\usepackage{subfigure}
\usepackage{enumerate}
\usepackage{color}

\newtheorem{thm}{Theorem}[section]
\newtheorem{lem}{Lemma}[section]
\newtheorem{cor}{Corollary}[section]

\newcommand{\x}{\mathbf{x}}

\newcommand{\A}{\mathbf{a}}

\newcommand{\eps}{\epsilon}

\oddsidemargin=1pt
\textwidth=470pt
\textheight=620pt
\topmargin=0pt
\voffset=-30pt

\begin{document}

\title{Reconstruction of Integers from Pairwise Distances}
\author{\begin{tabular}[t]{c@{\extracolsep{5em}}c}
Kishore Jaganathan & Babak Hassibi\end{tabular}\\
 \\
        Department of Electrical Engineering, Caltech \\
       Pasadena, CA~~91125
}
\date{}
\maketitle

\begin{abstract} 
Given a set of integers, one can easily construct the set of their pairwise distances. We consider the inverse problem: given a set of pairwise distances, find the integer set which realizes the pairwise distance set. This problem arises in a lot of fields in engineering and applied physics, and has confounded researchers for over 60 years. It is one of the few fundamental problems that are neither known to be NP-hard nor solvable by polynomial-time algorithms. Whether unique recovery is possible also remains an open question.

In many practical applications where this problem occurs, the integer set is naturally sparse (i.e., the integers are sufficiently spaced), a property which has not been explored. In this work, we exploit the sparse nature of the integer set and develop a polynomial-time algorithm which provably recovers the set of integers (up to linear shift and reversal) from the set of their pairwise distances with arbitrarily high probability if the sparsity is $O(n^{1/2-\eps})$. Numerical simulations verify the effectiveness of the proposed algorithm.
\end{abstract}

\section{Introduction}

We consider the problem of reconstructing a set of integers from the set of their pairwise distances. For example, consider the set $V=\{2,5,13,31,44\}$. Its pairwise distance set is given by $W=\{0,3,8,11,13,18,26,29,31,39,42\}$. We look at the problem of recovering the integer set $V$ from the pairwise distance set $W$\footnote[2]{If $V$ has a pairwise distance set $W$, then sets $c\pm V$ also have the same pairwise distance set $W$ for any integer $c$. These solutions are considered equivalent, and in all the applications it is considered good enough if any equivalent solution, i.e., up to linear translation and flipping, is recovered.}.

This recovery problem dates back to the origins of the classical phase retrieval problem in the 1930s \cite{patt1,patt2} and has received a lot of attention from researchers. More recently, it has arisen in computational biology, specifically in restriction site mapping of DNA \cite{stef}. This problem has also been posed as a computational geometry problem \cite{shamos}.


\subsection{Phase Retrieval}

Many measurement systems in practice can output only the squared-magnitude of the Fourier transform. Phase information is completely lost, because of which signal recovery is difficult. This is a fundamental problem in many fields, including optics \cite{walther}, X-ray crystallography \cite{millane}, astronomical imaging \cite{dainty}, speech processing \cite{rabiner}, particle scattering, electron microscopy etc. 

Recovering a signal from its Fourier transform magnitude is known as phase retrieval. Since squared-magnitude of the Fourier transform and autocorrelation are Fourier pairs, the phase retrieval problem can be equivalently posed as recovering a signal from its autocorrelation. 

Let $\x=\{x_0,x_1,....x_{n-1}\}$ be a discrete-time signal of length $n$ and sparsity $k$, where sparsity is defined as the number of non-zero elements. Its autocorrelation, denoted by $\A=\{a_0,a_1,....a_{n-1}\}$, is defined as 
\begin{equation}
a_i \overset{def}{=} \sum_jx_jx_{j+i} = (\x \star \tilde{\x})_i
\end{equation}
where $\tilde{\x}$ is the time-reversed version of $\x$. Also, let $V$ and $W$ denote the support set of the signal $\x$ and its autocorrelation $\A$ respectively, defined as
\begin{equation}
V=\{i|x_i \neq 0\}  \quad \quad \quad \& \quad \quad \quad W=\{i|a_i \neq 0\}
\end{equation}
The phase retrieval problem can be written as
\begin{align}
\nonumber & \textrm{find} \hspace{2.5cm} \x \\
& \textrm{subject to}  \hspace{1.4cm} \x \star \tilde{\x}=\A
\end{align}

{\bf Connection to integer recovery problem:} It is often useful to be able to reconstruct the support set of the signal $V$ from the support set of the autocorrelation $W$ \cite{fienup3}. In many applications (e.g, astronomy), the signal's support set is the desired information. In other applications, support knowledge makes  signal reconstruction process using available techniques  significantly easier \cite{kishore}. For example, in \cite{kishore2} we provide an algorithm which provably recovers sparse signals uniquely from their autocorrelation if the support set is known.

We will assume that if $a_i=0$, then no two elements in $\x$ are separated by a distance $i$, i.e.,
\begin{equation}
a_i=0 \Rightarrow x_jx_{i+j}=0 \ \forall \ j 
\end{equation}
This is a very weak assumption and holds with probability one if the non-zero entries of the signal are chosen from a non-degenerate distribution. With this assumption, the support recovery problem can be posed as
\begin{align}
\nonumber & \textrm{find} \hspace{2.5cm} V \\
& \textrm{subject to}  \hspace{1.4cm} \{|i-j| ~| ~(i,j) \in V\}=W
\end{align}
Note that $V$ is a set of integers, and $W$ is exactly its pairwise distance set.

\subsection{Computational Biology}

Over the last few years, there has been a lot of interest in DNA restriction site analysis. A DNA strand is a string on the letters $\{A,T,G,C\}$. Unfortunately, the DNA string cannot be explicitly observed and in order to map it, biochemical techniques which provide indirect information have been developed.

When a particular restriction enzyme is added to a DNA solution, the DNA is cut at particular restriction sites. For example, the enzyme \textit{EcoRI} cuts at locations of the pattern GAATTC. The goal of restriction site analysis is to determine the locations of every site for a given enzyme. In order to do this, a batch of DNA is exposed to a restriction enzyme in limited quantity so that fragments of all possible lengths exist (see Figure \ref{DNA}). Using gel electrophoresis, the fragment lengths can be measured. 

\begin{figure}[H]
\begin{centering}
\includegraphics[scale=0.45]{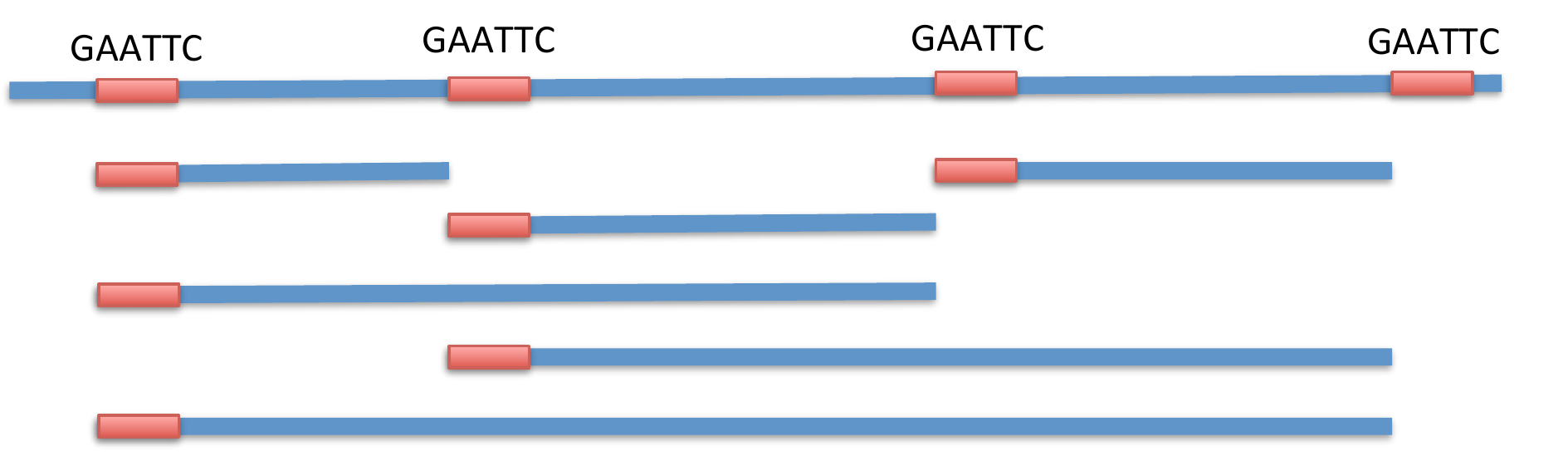}
\caption{Partial Digest Problem}
\label{DNA}
\end{centering}
\end{figure}

Recovering the locations of the restriction sites from the measured fragment lengths is known as the partial digest problem (a.k.a turnpike problem, see \cite{dakic}). The locations of the restriction sites correspond to the set of integers $V$, and the measured fragment lengths correspond to the set of pairwise distances $W$.

\section{Contributions}

Researchers have proposed a wide range of heuristics \cite{gerchberg, fienup1} to solve the phase retrieval problem, a brief summary of which can be found in \cite{fienup2}. \cite{bauschke} provides a theoretical framework to understand the heuristics, which are in essence an alternating projection between a convex set and a non-convex set. The problem with such an approach is that convergence is often to a local minimum, hence chances of successful recovery are less. Also, no theoretical guarantees can be provided.

\begin{figure}[H]
\begin{centering}
\hspace{2cm}
\includegraphics[scale=0.45]{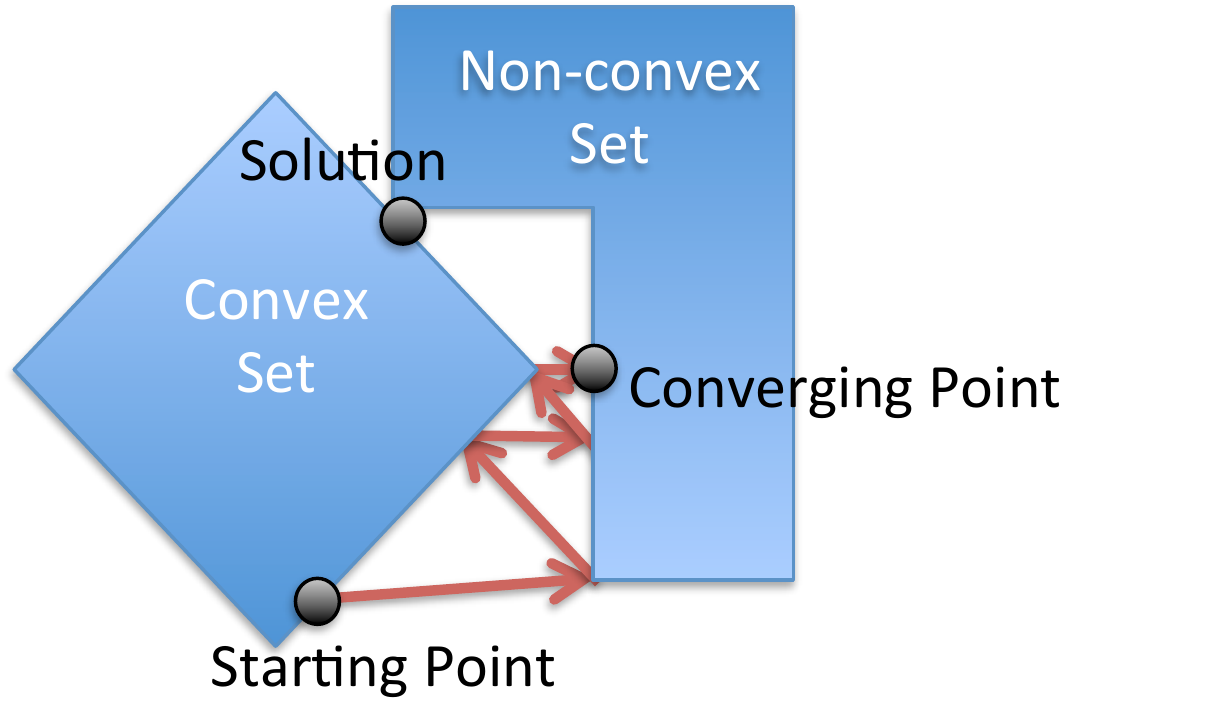}  
\caption{Alternating projection between a convex and a non-convex set}
\label{converge}
\end{centering}
\end{figure}

A variant of the partial digest problem, known as the turnpike problem, which is the problem of recovering an integer set from their pairwise distance multiset (multiplicity information of each pairwise distance also available) is also well studied. The most widely used algorithm to do this recovery is a worst-case exponential algorithm based on backtracking \cite{skiena}. \cite{dakic} provides a comprehensive summary of the existing algorithms. The question of unique provable recovery  using polynomial-time algorithms remains unanswered, and advanced equipments have been used in practice to obtain additional information to solve the problem \cite{karp,pandu}.

In many applications of these problems, the underlying signals are naturally sparse. For example, astronomical imaging deals with the locations of the stars in the sky, X-ray crystallography deals with the density of atoms and so on. In DNA restriction site analysis, it is very reasonable to assume that the restriction sites are sparsely distributed. 

Recently, some attempts have been made to exploit sparsity. An alternating projection based heuristic was proposed in \cite{vetterli}, a semidefinite relaxation based heuristic was explored in \cite{eldar}. In this work, we develop a probabilistic polynomial-time algorithm which can provably recover the underlying set uniquely if it is $O(n^{1/2-\eps})$ sparse.

\section{Main Result}

Suppose $V=\{v_0,v_1,....,v_{k-1}\}$ is a set of $k$ integers and $W=\{w_0,w_1,....,w_{K-1}\}$ is its pairwise distance set\footnote[3]{The elements of $V$ and $W$ are assumed to be in ascending order without loss of generality for convenience of notation, i.e., $v_0 < v_1 <.... < v_{k-1}$ and  $w_0 < w_1 <.... < w_{K-1}$}.

\begin{thm}[Main Result]
$V$ can be recovered uniquely (upto linear shift and reversal) from $W$ in polynomial-time with probability greater than $1-\delta$ for any $\delta>0$ if 
\begin{enumerate}[(i)]
\item $\exists ~ n \geq w_{K-1}$ such that $V$ is chosen uniformly at random\footnote{Can use other distributions too. For example, each integer belongs to the set $V$ with probability $\frac{k}{n}$ independent of each other} from $\{0,1,....,n-1\}$
\item $k=O(n^{1/2-\eps})$
\item $n > n(\eps,\delta)$
\end{enumerate}
\end{thm}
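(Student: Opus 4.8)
The plan is to reduce the reconstruction to a filtering problem and to show that, under the sparsity hypothesis, the filter is exact with high probability. First I would remove the shift ambiguity by normalizing $v_0=0$; then the largest distance satisfies $w_{K-1}=v_{k-1}-v_0=v_{k-1}$, so the two extreme points $0$ and $D:=w_{K-1}$ are known immediately. The key elementary observation is that $V\subseteq W$: since $v_i-v_0=v_i$ is a pairwise distance, every element of $V$ appears in $W$. Hence reconstruction amounts to deciding, for each of the $|W|=O(k^2)$ candidates $w\in W$, whether $w\in V$, rather than searching over all of $\{0,\dots,n-1\}$.

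Given a set $S\subseteq V$ of ``anchors,'' define the filter that accepts $p\in W$ iff $|p-s|\in W$ for every $s\in S$. Every true point survives, so the only issue is false positives. For a fixed spurious $p\notin V$, surviving requires each $|p-s|$ to coincide with some genuine pairwise distance; heuristically $\Pr[d\in W]\lesssim |W|/n=O(n^{-2\eps})$, and if the $m=|S|$ tests behaved independently the survival probability would be $O(n^{-2\eps m})$. Summing over the $\le|W|=O(n^{1-2\eps})$ candidates, the expected number of false positives is $O(n^{1-2\eps(m+1)})$, which tends to $0$ once $m>\tfrac{1}{2\eps}$; by Markov's inequality this yields success probability $>1-\delta$ for $n>n(\eps,\delta)$. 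Thus a constant number $m=\lceil 1/(2\eps)\rceil$ of correctly placed anchors makes the filter exact with high probability. Because $m$ is constant, the anchors can be obtained by brute force: enumerate all $\binom{|W|}{m}=O(n^{(1-2\eps)m})$ size-$m$ subsets $S$ of $W$ (polynomially many), run the filter to get a candidate set $\hat V_S$, and output any $\hat V_S$ whose pairwise distance set equals $W$. Since every true $m$-subset of $V$ is enumerated, it remains to show that (i) at least one true subset yields $\hat V_S=V$, and (ii) no subset yields a consistent but inequivalent set.

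Claim (i) follows from a first-moment argument over the $\binom{k}{m}$ true subsets: bounding the expected number of (subset, spurious point) pairs shows that most true anchor sets admit no false positive, so one exists with high probability. Claim (ii) is the uniqueness assertion of the theorem and is handled by bounding the expected number of inequivalent integer sets sharing the distance set $W$. Both reduce to counting solutions of the underlying linear relations among the $v_i$, and the brute-force enumeration subsumes the bootstrapping of the first anchors, so no separate greedy or backtracking analysis is required.

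The main obstacle is making the independence heuristic rigorous: the test events $\{|p-s|\in W\}$ all depend on the single random object $V$, which also generates $W$, so the factor $(n^{-2\eps})^m$ must be justified by genuine counting rather than assumed. Concretely, the heart of the proof is to bound, under the uniform model on $\{0,\dots,n-1\}$, the number of index tuples $(i,j,a,b,\dots)$ satisfying the collision relations $v_i-v_j=v_a-v_b$ and the longer chains produced by the $m$ anchor constraints: each additional distinct index contributes a free choice and hence a saving factor of $1/n$, while coincidences among indices (degenerate three-term relations such as $2v_i=v_a+v_j$, and so on) give lower-order contributions that must be checked separately. Controlling these degenerate configurations, and the mild dependence they introduce, is precisely where the $\eps$ slack in $k=O(n^{1/2-\eps})$ is spent, and I expect it to be the most delicate step.
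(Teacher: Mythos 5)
Your overall architecture (normalize so that $0\in V$, observe $V\subseteq W$, and prune the $O(k^2)$ candidates in $W$ by intersecting $W$ with shifted copies of itself anchored at known points) is the same intersection idea the paper uses. The genuine gap is in the step you yourself flag as the heart of the proof, and your guess about which configurations matter is backwards: the ``coincidence'' configurations are not lower-order, they dominate. A spurious candidate $p$ passes all $m$ anchor tests as soon as the $m$ required differences $p-s$, $s\in S$, appear among the differences of some $\omega$ extra points of $V$; since $\omega$ points carry $\binom{\omega}{2}$ internal differences, $\omega\approx\sqrt{2m}$ colluding points can realize all $m$ constraints, and such configurations occur with probability whose exponent scales like $\sqrt{m}$, not $m$. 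This is exactly what the paper's key lemma on multiple intersections (Lemma 6.7) establishes: the survival probability after $t$ intersections is only $\bigl(k^{2(1+\eps)}/n\bigr)^{\sqrt{t}/2}$, not $(k^2/n)^{t}$. Quantitatively, with $k=n^{1/2-\eps}$ the true per-candidate bound is about $n^{-\eps\sqrt{m}/2}$, so with your choice $m=\lceil 1/(2\eps)\rceil$ the expected number of false positives is about $n^{1-2\eps-\sqrt{\eps/8}}$, which diverges for small $\eps$. Your conclusion that a constant number of anchors suffices can be rescued, but only with $m=\Omega(1/\eps^{2})$ (the paper instead takes $t=\log k$ anchors, obtained via its graph step), and the counting of colluding configurations that you defer as a routine check is precisely the main technical content of the paper's proof, not a perturbation of the independence heuristic.

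There is a second gap in your claim (ii). Because you obtain anchors by brute force over all $m$-subsets $S\subseteq W$ and accept any output whose distance set equals $W$, correctness of your algorithm requires that no wrong anchor set produce a consistent but inequivalent set; that is the full uniqueness assertion for the turnpike problem in this random model, which you dispose of with an unproven first-moment bound over inequivalent realizing sets --- a bound where the same independence issue recurs, since the $\binom{k}{2}$ membership events of a competing set are all correlated through $V$. The paper never needs such a statement: its anchors are certified correct before they are used. The first anchor $u_{01}$ is read off deterministically as the difference of the two largest elements of $W$ (Lemma 6.1), and the graph step places an edge only when a pairwise distance is unique, so any edge whose length lies in $W$ must join two true points of $U$ (Lemma 6.6). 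This constructive certification, rather than enumerate-and-verify, is what lets the paper avoid proving any uniqueness-of-solutions lemma at all.
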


In order to overcome the trivial ambiguity of linear shift and reversal, we attempt to recover the equivalent solution set $U=\{u_0,u_1,....,u_{k-1}\}$ defined as follows
\begin{equation}
U=\begin{cases}
& V-v_0 \quad \textrm{if} \quad v_1-v_0 \leq v_{k-1}-v_{k-2}\\
& v_{k-1}-V \quad \textrm{otherwise}
\end{cases}
\end{equation}

i.e., the equivalent solution set $U$ we attempt to recover has the following properties:
\begin{enumerate}[(i)]
\item $u_0=0$ 
\item $u_1 - u_0 \leq u_{k-1}-u_{k-2}$
\end{enumerate}

\section{Theory}

Let $u_{ij}=u_j-u_i$ for $0 \leq i \leq j \leq k-1$. With this definition, $W=\{u_{ij}: 0 \leq i \leq j \leq k-1\}$ and $U=\{u_{0j}: 0 \leq j \leq k-1\}$. Note that $U \subseteq W$.

\subsection{Intersection Step}

The key idea of this step can be summarized as follows: suppose we know the value of $u_{0p}$ for some $p$, if $U_p$ and $W_p$ are defined as 
\begin{equation}
U_p=\{u_{0j}: p \leq j \leq k-1\}  \quad \&  \quad W_p=W+u_{0p}
\end{equation}
then we have
\begin{equation}
U_p \subseteq W \cap W_p 
\end{equation}

The idea can be extended to multiple intersections. Suppose we know $\{u_{0i_p}:1\leq p\leq t\}$, we can construct $\{W_{i_p}:1\leq p\leq t\}$ and see that
\begin{equation} 
U_{i_t} \subseteq W \cap \left(\bigcap_{p=1}^{t} W_{i_p}\right)
\end{equation}

\subsection{Graph Step}

For a given integer set $Z=\{z_0,z_1,....z_{|Z|-1}\}$, construct a graph $G(Z)$ with $|Z|$ vertices such that there exists an edge between $z_i$ and $z_j$ iff the following is satisfied 
\begin{equation}
\forall z_g,z_h \in Z, \quad z_g-z_h\neq z_i-z_j \quad \textrm{unless} \quad  (i,j)=(g,h) 
\end{equation}
i.e., there exists an edge between two vertices if their corresponding pairwise distance is unique. For example, consider the integer set $Z=\{2,5,19,53,67,84\}$. The graph $G(Z)$ looks as shown in Figure \ref{graph1}.

\begin{figure}[H]
\begin{centering}
\includegraphics[scale=0.6]{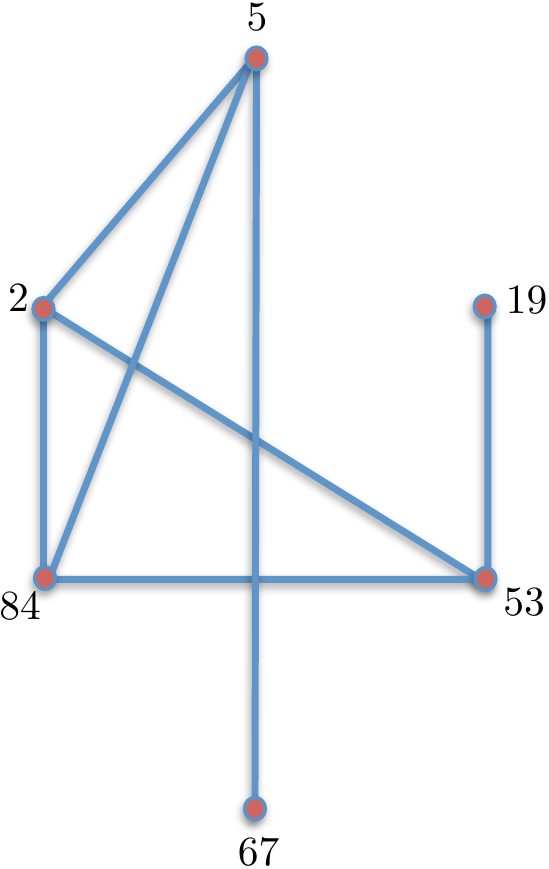}
\caption{The graph $G(Z)$ for $Z=\{2,5,19,53,67,84\}$}
\label{graph1}
\end{centering}
\end{figure}
 There exists an edge between $2$ and $5$ as it is the only pair of integers with difference $3$, there doesn't exist an edge between $5$ and $19$ because there is another pair $\{53,67\}$ with difference $14$ and so on.

The main idea of this step is as follows: suppose we construct a graph $G(Z)$ where $U \subseteq Z \subseteq W$. If there exists an edge between a pair of integers in $\{z_i,z_j\} \in Z$ such that  $z_i-z_j \in W$, then $\{z_i,z_j\} \in U$. This holds because if $\{z_i,z_j\} \notin U$, then since $z_i-z_j \in W$ there has to be a pair of integers in $U$ which have a pairwise distance $z_i-z_j$, which would contradict the fact that there is an edge between $z_i$ and $z_j$.

\section{Algorithm}

\begin{algorithm}[H]
\caption{Integer Recovery Algorithm}
\label{algo1} 
\textbf{Input:} Pairwise distance set $W$ \\
\textbf{Output:} Integer set $U$ which realizes $W$

\begin{enumerate}[1.]
\item Infer $u_{01}$ from $W$
\item Construct the set $W_1=W+u_{01}$ 
\\ \\ If $k=O(n^{1/4-\eps})$
\item Calculate $U_1=W \cap W_1$
\item Recover $U=\{0\} \cup U_1$
\\ \\ Else if $k=O(n^{1/2-\eps})$
\item Construct the graph $G(\{0\} \cup (W\cap W_1))$ and infer $\{u_{0i_p}: 1 \leq p \leq t=log(k)\}$
\item Construct the set $W_{i_p}=W+u_{0i_p}$ for $1 \leq p \leq t$
\item Calculate $U_{i_t}=W \cap \left(\bigcap_{p=1}^{t} W_{i_p}\right)$
\item Define $\tilde{U}=\{\tilde{u}_0,...\tilde{u}_{k-1}\}$ as $\tilde{U}=u_{k-1}-U$ and infer $\{\tilde{u}_{0p}$: $1 \leq p \leq t\}$ from $U_{i_t}$
\item Construct the set $\tilde{W}_{p}=W+\tilde{u}_{0p}$ for $1 \leq p \leq t=log(k)$
\item Calculate $\tilde{U}_{t}=\left(\bigcap_{p=0}^{t} \tilde{W}_p\right)$
\item Recover $\tilde{U}=\{\tilde{u}_{0p}:0\leq p \leq t-1\} \cup \tilde{U}_t$
\item Recover $U=\tilde{u}_{k-1}-\tilde{U}$
\end{enumerate}
\end{algorithm}

Algorithm \ref{algo1} outlines the steps needed to recover the set of integers from the set of their pairwise distances. Lemma \ref{u01} explains step 1. The proof for step 3 is given in Corollary \ref{4rt}. The proof for Step 5 is given in Lemma \ref{glem}.  Lemma \ref{intt} provides the proofs for steps 7 and 10.

\subsection{Circular Pairwise Distances (Beltway problem)}
In the phase retrieval setup, if the autocorrelation is defined circularly, i.e.,
\begin{equation}
a_i \overset{def}{=} \sum_jx_jx_{j+i} = (\x \star \tilde{\x})_i
\end{equation}
where the indices are considered modulo $n$, the integer recovery problem  can be stated as

\begin{align}
\nonumber & \textrm{find} \hspace{2.5cm} V \\
& \textrm{subject to}  \hspace{1.4cm} \{(i-j)\textrm{mod}(n) ~| ~(i,j) \in V\}=W
\end{align}
Algorithm \ref{algo1} can be modified to solve this problem. Suppose $u_{ij}=(u_j-u_i)\textrm{mod}(n)$ for $0 \leq i,j \leq k-1$. The intersection step in this case can be modified as follows: suppose we know the value of $u_{0p}$ for some $p$, if $W_p$ is defined as 
\begin{equation}
W_p=(W+u_{0p})\textrm{mod}(n)
\end{equation}
then we have
\begin{equation}
U \subseteq W \cap W_p 
\end{equation}
The idea can be extended to multiple intersections. Suppose we know $\{u_{0i_p}:1\leq p\leq t\}$, we can construct $\{W_{i_p}:1\leq p\leq t\}$ and see that
\begin{equation} 
U  \subseteq W \cap \left(\bigcap_{p=1}^{t} W_{i_p}\right)
\end{equation}
The graph step can be modified by having an edge between $z_i$ and $z_j$ iff the following is satisfied 
\begin{equation}
\forall z_g,z_h \in Z, \quad (z_g-z_h)\textrm{mod}(n)\neq (z_i-z_j)\textrm{mod}(n) \quad \textrm{unless} \quad  (i,j)=(g,h)
\end{equation}

Algorithm \ref{algo2} outlines the steps needed to recover the set of integers from the set of their circular pairwise distances. 
 
\begin{algorithm}[H]
\caption{Integer Recovery Algorithm}
\label{algo2} 
\textbf{Input:} Circular pairwise distance set $W$ \\
\textbf{Output:} Integer set $U$ which realizes $W$

\begin{enumerate}[1.]
\item Infer $u_{01}$ from $W$
\item Construct the set $W_1=(W+u_{01})\textrm{mod}(n)$ 
\item Calculate $W \cap W_1$, infer $u_{02}$
\item Construct the set $W_2=(W+u_{02})\textrm{mod}(n)$
\item Construct the graph $G(W\cap W_1 \cap W_2)$ and infer $\{u_{0i_p}: 1 \leq p \leq t=log(k)\}$
\item Construct the set $W_{i_p}=(W+u_{0i_p})\textrm{mod}(n)$ for $1 \leq p \leq t$
\item Calculate $U=W \cap \left(\bigcap_{p=1}^{t} W_{i_p}\right)$
\end{enumerate}
\end{algorithm}

The proofs for all the steps are similar to Algorithm 1, except for steps 1 and 3. In step 1, $u_{01}$ can be chosen as the minimum non-zero value in $W$ without loss of generality. In step 3, a choice has to be made between clockwise and anti-clockwise completion after selecting $u_{01}$, and can be done as follows: $W\cap W_1$ has terms $\{0, u_{02}\}$ and $\{u_{01}, u_{21}=u_{01}-u_{02}\}$ with pairwise distance $u_{02}$ (distances considered modulo $n$). It can be seen that if $k=O(n^{1/2-\eps})$, then $u_{02}$ is the minimum number greater than $u_{01}$ in $W$ with this property with arbitrarily high probability. Hence, $u_{02}$ can be chosen by using the minimum value $c$ in $W$ greater than $u_{01}$, such that two pairs have a pairwise distance $c$, one of them being $\{0,c\}$ and the other being $\{u_{01},u_{01}-c\}$.

\section{Proof of Main Theorem}

Suppose $V$ is a $k$-element subset of $\{0,1,....,n-1\}$ chosen uniformly at random, $k=O(n^{1/2-\eps})$ and $n>n(\eps,\delta)$.

\begin{lem} 
$u_{01}$ can be inferred from $W$.
\label{u01}
\end{lem}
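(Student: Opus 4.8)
The plan is to show that $u_{01}$ is pinned down deterministically by the two largest elements of $W$, namely $u_{01}=w_{K-1}-w_{K-2}$, so that no probabilistic argument is needed at this step at all. The starting observation is that the largest pairwise distance is the full span $w_{K-1}=u_{k-1}-u_0$, since $u_0$ and $u_{k-1}$ are the extreme elements of $U$ and every other distance $u_j-u_i$ satisfies $u_j-u_i\le u_{k-1}-u_0$.

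Next I would identify the second largest element of $W$. Every pairwise distance other than the span has the form $u_j-u_i$ with $(i,j)\neq(0,k-1)$, which forces either $i\ge 1$ or $j\le k-2$. In the first case $u_j-u_i\le u_{k-1}-u_1=w_{K-1}-(u_1-u_0)$, and in the second $u_j-u_i\le u_{k-2}-u_0=w_{K-1}-(u_{k-1}-u_{k-2})$. Both upper bounds are actually attained, by the pairs $(1,k-1)$ and $(0,k-2)$, whose distances lie in $W$. Hence the largest element of $W$ strictly below the span is $w_{K-2}=w_{K-1}-\min\!\big(u_1-u_0,\;u_{k-1}-u_{k-2}\big)$, and rearranging gives $w_{K-1}-w_{K-2}=\min\!\big(u_1-u_0,\;u_{k-1}-u_{k-2}\big)$.

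Finally I would invoke the normalization that defines $U$: by construction $u_1-u_0\le u_{k-1}-u_{k-2}$, so the minimum on the right is exactly $u_1-u_0=u_{01}$. Therefore $u_{01}=w_{K-1}-w_{K-2}$, and the inference reduces to reading off the top two distinct values of the sorted set $W$ and subtracting them. A quick sanity check on the running example $U=\{0,3,11,29,42\}$ confirms this: $w_{K-1}-w_{K-2}=42-39=3=u_{01}$.

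The only point requiring real care is the claim that the two \emph{omitted end gap} pairs dominate every other distance; this is the combinatorial heart of the argument and is handled by the dichotomy above (any non-span pair drops at least one of the two extreme gaps), which holds for every $k\ge 2$ — for $k=2$ the second value is simply $w_0=0$. It is worth emphasizing that this step uses neither the uniform random model nor the sparsity bound $k=O(n^{1/2-\eps})$; the randomness will only become essential in the later steps, when one must argue that intersections such as $W\cap W_1$ collapse exactly onto $U_1$ rather than merely containing it.
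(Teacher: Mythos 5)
Your proposal is correct and follows essentially the same route as the paper: identify the largest element of $W$ as the span $u_{0,k-1}$, identify the second largest as $u_{1,k-1}$ using the normalization $u_{01}\le u_{k-2,k-1}$, and subtract to get $u_{01}=w_{K-1}-w_{K-2}$. You simply spell out more carefully (via the dichotomy $i\ge 1$ or $j\le k-2$) why those two pairs dominate all other distances, which the paper's terse proof leaves implicit.
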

\begin{proof}
The first and second highest terms in $W$ are given by $u_{0,k-1}$ and $u_{1,k-1}$ respectively since $u_{01} \leq u_{k-2,k-1}$. Since $u_{01}=u_{0,k-1}-u_{1,k-1}$, we can calculate $u_{01}$ from $W$.
\end{proof}

\begin{lem}
Probability that an integer $l$ belongs to $W$ is 
\label{int0}
\begin{enumerate}[(i)]
\item $1$ if $l \in U$.
\item less than or equal to $\frac{k^2}{n}+o(\frac{k^2}{n})$ if $l \notin U$.
\end{enumerate}
\end{lem}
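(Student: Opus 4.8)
The plan is to treat the two cases separately: part (i) is an immediate consequence of a containment already noted in the excerpt, while part (ii) reduces to a one-line union bound over the pairs of $V$ that could realize the distance $l$, followed by some asymptotic bookkeeping.

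For part (i), I would simply invoke the observation $U \subseteq W$ stated at the start of the Theory section. If $l \in U$, then $l = u_{0j}$ for some $j$, and since every $u_{0j}$ is by definition an element of $W$, we get $l \in W$ with certainty, so the probability is $1$. No probabilistic argument is needed.

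For part (ii), fix an integer $l$ with $1 \le l \le n-1$. The event $\{l \in W\}$ is precisely the event that some pair of elements of $V$ is at distance $l$, i.e.
\[
\{l \in W\} = \bigcup_{a=0}^{n-1-l} \{a \in V \text{ and } a+l \in V\}.
\]
There are $n-l \le n$ such pairs, and for a uniformly random $k$-subset $V$ of $\{0,\dots,n-1\}$ the probability that two fixed elements both lie in $V$ is $\binom{n-2}{k-2}/\binom{n}{k} = \frac{k(k-1)}{n(n-1)}$. A union bound then gives
\[
P(l \in W) \le (n-l)\,\frac{k(k-1)}{n(n-1)} \le \frac{k(k-1)}{n} = \frac{k^2}{n} - \frac{k}{n}.
\]
Since $\frac{k}{n} = o\!\left(\frac{k^2}{n}\right)$ as $k \to \infty$, this unconditional estimate already matches the claimed bound $\frac{k^2}{n} + o(\frac{k^2}{n})$ (under the alternative i.i.d.\ model the same union bound uses $P(a,a+l \in V) = k^2/n^2$ and is even cleaner).

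The only genuine subtlety is the conditioning implicit in the phrase ``if $l \notin U$'': the quantity to be bounded is really $P(l \in W \mid l \notin U) = P(l \in W,\, l\notin U)/P(l \notin U)$. I would dispatch this by noting that, again because $U \subseteq W$, the event $\{l \in U\}$ implies $\{l \in W\}$, so $P(l \in U) \le P(l \in W) \le \frac{k^2}{n} = o(1)$ (here $k = O(n^{1/2-\eps})$ forces $k^2/n \to 0$); hence $P(l \notin U) = 1 - o(1)$, and the conditioning inflates the unconditional bound by only a factor $1+o(1)$, which is absorbed into the $o(\frac{k^2}{n})$ term. I expect this bookkeeping to be the main (and essentially only) point requiring care, since the discrepancies between $\frac{k(k-1)}{n(n-1)}$ and $\frac{k^2}{n^2}$, between $n-l$ and $n$, and the conditioning correction must all be shown to collapse into $o(\frac{k^2}{n})$; the probabilistic content itself is just the single-line union bound.
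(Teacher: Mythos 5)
Your proposal is correct and follows essentially the same route as the paper: part (i) from $U \subseteq W$ (since $u_0 = 0$), and part (ii) by a union bound over the $O(n)$ pairs $\{g, g+l\}$ that could realize the distance $l$, each contributing probability $\Theta(k^2/n^2)$. The only difference is that you explicitly justify the conditioning on $\{l \notin U\}$ via $P(A \mid B) \le P(A)/P(B)$ with $P(l \notin U) = 1 - o(1)$, a step the paper's one-line proof silently absorbs into the $o(k^2/n)$ term.
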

\begin{proof}
\begin{enumerate}[(i)]
\item Since $0 \in U$, if $l \in U$ then $l \in W$.
\item If $l \notin U$, probability that an index $l$ belongs to $W$ can be bounded as follows
\begin{equation}
Pr\{l \in W\ | l \notin U\} = Pr\{ \bigcup_g g, g+l \in V| l \notin U\} \leq \frac{k^2}{n}+o\left(\frac{k^2}{n}\right)
\end{equation}
\end{enumerate}
\end{proof}

\begin{lem}
The probability that an integer $l$ not in $U$ belongs to $W \cap W_2$ is less than or equal to $\frac{k^4}{n^2}+o(\frac{k^4}{n^2})$
\label{int2}
\end{lem}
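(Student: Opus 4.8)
The plan is to extend the single-pair union bound of Lemma~\ref{int0} to a union over \emph{pairs} of realizing pairs. Since $W_2 = W + u_{02}$, the membership $l \in W_2$ is equivalent to $l - u_{02} \in W$. Hence $l \in W \cap W_2$ forces two realizing pairs to sit inside $V$ simultaneously: some $g$ with $\{g,g+l\} \subseteq V$ (certifying $l \in W$) and some $h$ with $\{h, h+l-u_{02}\} \subseteq V$ (certifying $l - u_{02} \in W$). A union bound over the choice of $(g,h)$ gives
\begin{equation}
\Pr\{l \in W \cap W_2 \mid l \notin U\} \;\le\; \sum_{g,h} \Pr\big\{\{g,\,g+l,\,h,\,h+l-u_{02}\} \subseteq V\big\}.
\end{equation}

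First I would evaluate the generic term, where the four integers $g, g+l, h, h+l-u_{02}$ are distinct. Under the uniform model this equals $\binom{n-4}{k-4}/\binom{n}{k} = \tfrac{k(k-1)(k-2)(k-3)}{n(n-1)(n-2)(n-3)} = \tfrac{k^4}{n^4}(1+o(1))$, and there are at most $n^2$ admissible pairs $(g,h)$ (each index ranges over an interval of length at most $n$). The generic contribution is therefore at most $n^2 \cdot \tfrac{k^4}{n^4}(1+o(1)) = \tfrac{k^4}{n^2} + o\!\big(\tfrac{k^4}{n^2}\big)$, which is precisely the claimed bound. The degenerate terms, in which two of the four integers coincide, involve at most three distinct elements (probability $O(k^3/n^3)$) and, since one coincidence removes a degree of freedom, number only $O(n)$; they contribute $O(k^3/n^2) = o(k^4/n^2)$, with doubly-degenerate configurations smaller still.

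Two issues remain, and the second is where I expect the real work. First, $u_{02}$ is not a constant but the random order statistic $u_2 - u_0$; I would condition on its realized value $u_{02}=m$ and observe that the per-term estimate $k^4/n^4$ is uniform in $m$ up to $(1+o(1))$ factors, the conditioning perturbing the distribution of the remaining elements only at lower order. Second, and more delicately, the union bound tacitly treats the second pair as ``random''; but if $l - u_{02} \in U$ then $l - u_{02} \in W$ holds with probability one and the estimate collapses to $\Pr\{l\in W\}\le k^2/n$, which is \emph{larger} than $k^4/n^2$. The remedy is to split on the event $\{l - u_{02}\in U\}$: this event prescribes a single integer to lie in $U$ and so carries probability $O(k/n)$, essentially independently of the pair certifying $l \in W$, whence $\Pr\{l \in W,\ l - u_{02}\in U\} = O\!\big(\tfrac{k}{n}\cdot\tfrac{k^2}{n}\big) = O(k^3/n^2) = o(k^4/n^2)$; on the complementary event the generic union bound above applies. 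Summing the generic contribution with these lower-order corrections yields $\Pr\{l \in W \cap W_2 \mid l \notin U\} \le \tfrac{k^4}{n^2} + o\!\big(\tfrac{k^4}{n^2}\big)$.
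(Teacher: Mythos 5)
Your proposal is correct and takes essentially the same route as the paper: condition on the realized value of the shift, then union-bound over pairs $(g,h)$ of realizing pairs, giving at most $n^2$ terms of probability roughly $(k/n)^4$ each, hence $\frac{k^4}{n^2}+o\big(\frac{k^4}{n^2}\big)$. The additional care you take with degenerate coincidences and with the event $\{l-u_{02}\in U\}$ only makes explicit what the paper absorbs into its $o(\cdot)$ term --- and note the latter split is not strictly necessary, since the union bound over $(g,h)$ is a deterministic inclusion that already covers realizing pairs anchored at $v_0$.
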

\begin{proof}
\begin{equation}
Pr\{l \in W\cap W_2 | l \notin U\} = Pr\{l, l-u_{01} \in W | l \notin U\} 
\end{equation}
\begin{equation}
\leq \sum_dPr\{u_{01}=d\}\left(\sum_{g,h}Pr\{g,g+l,h,h+l-d \in V| (0,d) \in U, (1,..,d-1,l)\notin U\}\right)
\end{equation}
\begin{equation}
\leq \sum_dPr\{u_{01}=d\}\left(\frac{k^4}{n^2}+o\left(\frac{k^4}{n^2}\right)\right)  = \frac{k^4}{n^2}+o\left(\frac{k^4}{n^2}\right) 
\end{equation}
\end{proof}

\begin{lem}
\label{survive}
The probability that an index $l$ in $W$, which is not in $U$, belongs to $W \cap W_2$ is less than $\frac{k^2}{n}+o(\frac{k^2}{n})$
\end{lem}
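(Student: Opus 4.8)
The plan is to read off this sharper estimate from Lemma \ref{int2} by exploiting the extra conditioning that $l$ already lies in $W$. Since $W\cap W_2\subseteq W$, the event $\{l\in W\cap W_2\}$ entails $\{l\in W\}$, so by the definition of conditional probability,
\begin{equation}
Pr\{l\in W\cap W_2\mid l\in W,\ l\notin U\}=\frac{Pr\{l\in W\cap W_2\mid l\notin U\}}{Pr\{l\in W\mid l\notin U\}}.
\end{equation}
Lemma \ref{int2} bounds the numerator by $\frac{k^4}{n^2}+o(\frac{k^4}{n^2})$, so the whole task reduces to pinning down the denominator to its leading order $\frac{k^2}{n}$. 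Once that is in hand, substituting both estimates yields a ratio of $\frac{k^2}{n}+o(\frac{k^2}{n})$, which is exactly the assertion.

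First I would establish $Pr\{l\in W\mid l\notin U\}=\frac{k^2}{n}+o(\frac{k^2}{n})$, i.e. upgrade the one-sided bound of Lemma \ref{int0}(ii) to a two-sided estimate. The event $\{l\in W\}$ is $\bigcup_g\{g,\,g+l\in V\}$, whose first moment is $(n-l)\binom{k}{2}/\binom{n}{2}$, which equals $\frac{k^2}{n}(1+o(1))$ for $l=o(n)$; this is the upper bound already recorded in Lemma \ref{int0}. For the matching lower bound I would invoke Bonferroni: the probability that two distinct witnessing pairs coexist is $O\big((k^2/n)^2\big)=o(k^2/n)$ under the sparsity hypothesis, so inclusion--exclusion gives $Pr\{l\in W\mid l\notin U\}\geq\frac{k^2}{n}-o(\frac{k^2}{n})$, completing the denominator estimate.

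The main obstacle is precisely this denominator lower bound, since Lemma \ref{int0} supplies only an upper bound and a small denominator would inflate the ratio. The sparsity assumption $k=O(n^{1/2-\eps})$ is what rescues the argument: it forces $k^2/n=O(n^{-2\eps})\to0$, making the second-order overlap terms negligible against the first moment. I would also have to treat distances $l$ near $n$, where the count $n-l$ of admissible witnessing positions degrades and the first moment falls below $\frac{k^2}{n}$; I expect these to be dispatched either by noting that such large $l$ are not the false distances relevant to the algorithm or by absorbing their contribution into the $o(\cdot)$ term.
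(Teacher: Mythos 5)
Your proposal follows essentially the same route as the paper's proof: the identical conditional-probability ratio, Lemma \ref{int2} for the numerator, and the lower bound $\frac{k^2}{n}-o(\frac{k^2}{n})$ on the denominator $Pr\{l \in W \mid l \notin U\}$. The only difference is one of rigor rather than method: the paper simply asserts that denominator lower bound inside its inequality chain, whereas you supply the Bonferroni justification and flag the boundary case of $l$ near $n$ (where the bound genuinely degrades), so your write-up is, if anything, more complete than the original.
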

\begin{proof}
\begin{equation}
Pr\{l \in W\cap W_2 | l \in W, l \notin U\} \leq \frac{Pr\{l \in W\cap W_2| l \notin U\}}{Pr\{l \in W| l \notin U\}} \leq \frac{\frac{k^4}{n^2}+o(\frac{k^4}{n^2})}{\frac{k^2}{n}-o(\frac{k^2}{n})} \leq \frac{k^2}{n} +o\left(\frac{k^2}{n}\right)
\end{equation}
\end{proof}

\begin{lem} 
$Pr\{u_{0t} > u_{k-k^\alpha,k-1}\} \leq \delta$ for any $\alpha, \delta>0$ if $t=log(k)$.
\label{dist}
\end{lem}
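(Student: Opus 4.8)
The plan is to rewrite both sides in terms of the consecutive spacings of the random set and then exploit that $u_{0t}$ involves only $t=\log k$ spacings while $u_{k-k^\alpha,k-1}$ involves $k^\alpha-1$ of them, so that the event is a large-deviation event driven by this mismatch in ``budget.'' Writing $g_i:=u_i-u_{i-1}$ for the spacings of the sorted set, I would start from
\begin{equation}
u_{0t}=\sum_{i=1}^{t}g_i,\qquad u_{k-k^\alpha,k-1}=\sum_{i=k-k^\alpha+1}^{k-1}g_i .
\end{equation}
The orientation convention in the definition of $U$ only fixes which end is labelled $0$ (and enforces $g_1\le g_{k-1}$); by the exchangeability below this merely relabels the spacings and is innocuous, and for $\alpha<1$ and $k$ large the two index sets above are disjoint ($\log k+k^\alpha-1<k-1$). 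Larger $\alpha$ only enlarges the right-hand span and makes the claim easier, so I treat $\alpha<1$.

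To make the comparison precise I would use the spacing representation of a uniform sample. Encoding the $k$-subset by its stars-and-bars gaps $(h_0,\dots,h_k)$ with $h_0=v_0$, $h_i=g_i-1$ for $1\le i\le k-1$, and $h_k=n-1-v_{k-1}$, the tuple $(h_0,\dots,h_k)$ is uniform over nonnegative integers summing to $n-k$, hence \emph{exchangeable}; each $h_i$ has mean $\tfrac{n-k}{k+1}\approx \tfrac{n}{k}$. (In the $n\to\infty$ limit this is the discrete analogue of the R\'enyi representation, in which the spacings are i.i.d.\ exponentials divided by their common sum; there the shared normalizer cancels and the event reduces exactly to comparing two independent $\mathrm{Gamma}(t,1)$ and $\mathrm{Gamma}(k^\alpha-1,1)$ variables.) Then $u_{0t}=t+\sum_{i=1}^{t}h_i$ is a sum of $t$ exchangeable terms with mean $\approx \tfrac{n\log k}{k}$, while $u_{k-k^\alpha,k-1}=(k^\alpha-1)+\sum_{i=k-k^\alpha+1}^{k-1}h_i$ has mean $\approx \tfrac{n\,k^\alpha}{k}$.

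The final step is a two-sided concentration bound. I would fix a threshold $c$ with $2\tfrac{n\log k}{k}\le c\le \tfrac12\tfrac{n k^\alpha}{k}$, which is possible once $n$ (hence $k$) is large because $\log k=o(k^\alpha)$ for every $\alpha>0$, and then use
\begin{equation}
\Pr\{u_{0t}>u_{k-k^\alpha,k-1}\}\le \Pr\{u_{0t}>c\}+\Pr\{u_{k-k^\alpha,k-1}<c\}.
\end{equation}
The first term is the upper tail of a sum of $t=\log k$ spacings, controlled by a Chernoff bound for the relevant negative-hypergeometric (resp.\ $\mathrm{Gamma}(t,1)$) law, and it tends to $0$ as $t\to\infty$; the second is the lower tail of a sum of $k^\alpha-1$ spacings, which is exponentially small in $k^\alpha$. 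Taking $n>n(\eps,\delta)$ large enough makes each term at most $\delta/2$.

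I expect the main obstacle to be the lower-tail estimate on $u_{k-k^\alpha,k-1}$: a priori the top $k^\alpha$ points could cluster, so one must rule out that their span drops below a constant fraction of its mean. In the exponential representation this is simply the lower tail of a $\mathrm{Gamma}(k^\alpha-1,1)$ variable and is standard; the only extra care is that in the exact discrete model the sub-sums of $(h_i)$ are not independent but merely exchangeable and negatively correlated, so I would not rely on independence but instead apply a marginal concentration (Chernoff/Bernstein) bound to each of the two sums separately and combine them through the union bound displayed above.
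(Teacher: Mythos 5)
Your proposal is correct and follows essentially the same route as the paper: both arguments write $u_{0t}$ and $u_{k-k^\alpha,k-1}$ as sums of consecutive spacings with means roughly $\frac{n\log k}{k}$ and $\frac{nk^\alpha}{k}$, insert an intermediate threshold between these two scales, and conclude by a union bound combining an upper-tail estimate on $u_{0t}$ with a lower-tail estimate on $u_{k-k^\alpha,k-1}$. The only differences are in execution rather than strategy: the paper uses Chebyshev's inequality with the explicit threshold $\frac{k^{\alpha/2}n}{k}$ where you invoke Chernoff/Bernstein bounds via the exchangeable stars-and-bars gap representation, and your handling of the dependence among spacings and of the orientation convention is more careful than the paper's, which simply estimates means and variances and leaves those issues implicit.
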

\begin{proof}
Since the integers are chosen uniformly at random, we have
\begin{equation}
E[u_{0t}] = \sum_{i=0}^{i=t-1} E[u_{i,i+1}]  \leq \frac{tn}{k} \quad \& \quad  var[u_{0t}] =  E[u_{0t}^2] - E[u_{0t}]^2 \leq \frac{2tn^2}{k^2}  + o\left(\frac{2tn^2}{k^2}\right)
\end{equation}
Using Chebyshev's inequality, we get
\begin{equation}
Pr\{u_{0t} > \frac{k^{\alpha/2}n}{k}\} \leq \frac{ \frac{2tn^2}{k^2}  + o\left(\frac{2tn^2}{k^2}\right)}{\left(\frac{k^{\alpha/2}n}{k}-o\left(\frac{k^{\alpha/2}n}{k}\right)\right)^2}\leq \frac{2t}{k^\alpha}+o\left(\frac{2t}{k^\alpha}\right)
\end{equation}
Similarly, we have
\begin{equation}
E[u_{k-k^\alpha,k-1}]= \frac{(k^\alpha-1) n}{k}  \quad \quad \& \quad \quad var\{u_{k-k^\alpha,k-1}\} \leq \frac{2k^\alpha n^2}{k^2}  + o\left(\frac{2k^\alpha n^2}{k^2}\right)
\end{equation}
and
\begin{equation}
Pr\{u_{k-k^\alpha,k-1}<\frac{k^{\alpha/2} n}{k}\} \leq \frac{ \frac{2k^\alpha n^2}{k^2}  + o\left(\frac{2k^\alpha n^2}{k^2}\right)}{\left(\frac{k^{\alpha}n}{k}-o\left(\frac{k^{\alpha}n}{k}\right)\right)^2}\leq \frac{2}{k^\alpha}+o\left(\frac{2}{k^\alpha}\right)
\end{equation}
Hence, we can bound the desired probability as follows
\begin{equation}
Pr\{u_{0t} > u_{k-k^\alpha,k-1}\} \leq  \frac{2t}{k^\alpha}+o\left(\frac{2t}{k^\alpha}\right) < \delta
\end{equation}
for any $\delta>0$.
\end{proof}

\begin{lem}
In the graph $G(\{0\} \cup (W \cap W_2))$, integers $\{u_{0p}: 1 \leq p \leq t=log(k)\}$ have an edge with $u_{0,k-1}$ with probability greater than $1-\delta$ for any $\delta>0$.
\label{glem}
\end{lem}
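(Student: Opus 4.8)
The plan is to fix $Z := \{0\} \cup (W \cap W_2)$, check that every vertex named in the statement lies in $Z$, bound the probability that a single edge $(u_{0p},u_{0,k-1})$ is missing, and finish with a union bound over $1\le p \le t$. For membership, note $u_{0p}-u_{01}=u_{1p}\in W$ and $u_{0,k-1}-u_{01}=u_{1,k-1}\in W$, so both $u_{0p}$ and $u_{0,k-1}$ lie in $W\cap W_2$ and are vertices of $G(Z)$. Writing $d_p := u_{0,k-1}-u_{0p}=u_{p,k-1}$, the definition of $G(Z)$ says the edge is present exactly when $d_p$ arises as a pairwise difference of $Z$ only through the pair $\{u_{0p},u_{0,k-1}\}$. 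Since every element of $Z$ lies in $[0,u_{0,k-1}]$, any other pair $\{z',z\}$ with $z-z'=d_p$ must have $z' \le u_{0,k-1}-d_p=u_{0p}$, and $z'=u_{0p}$ forces $z=u_{0,k-1}$; hence the edge fails iff there is a $z'\in Z$ with $z'<u_{0p}$ and $z:=z'+d_p\in Z$.

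The crucial step is to confine the candidate large endpoint $z$. Because $z\in Z\setminus\{0\}\subseteq W$ and $z=z'+d_p\ge d_p$, the integer $z$ is a pairwise distance of $U$ that is at least $d_p$. Any such distance $u_{ab}\ge d_p=u_{0,k-1}-u_{0p}$ forces $u_a\le u_p\le u_{0t}$ and $u_b\ge u_{k-1}-u_p$; by Lemma \ref{dist}, with probability at least $1-\delta/2$ we have $u_{0t}\le u_{k-k^\alpha,k-1}$, which yields $u_b\ge u_{k-1}-u_{0t}\ge u_{0,k-k^\alpha}$. Thus, on this event, every candidate large endpoint is a distance $u_{ab}$ with $a\le t$ and $b\ge k-k^\alpha$, so there are at most $(t+1)k^\alpha$ of them (and at most that many candidate colliding pairs) for every $p$ simultaneously. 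This reduction is what makes the argument work: it replaces a sum over the $\Theta(n/k)$ integers in the low band by a sum over only $O(t\,k^\alpha)$ index pairs.

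It then remains to union bound, over $1\le p\le t$ and over these $O(t\,k^\alpha)$ candidate pairs $\{z',z\}$, the probability that both $z'$ and $z$ land in $Z$, splitting into cases by whether each endpoint lies in $U$ or is spurious (in $(W\cap W_2)\setminus U$). If both endpoints lie in $U$, the collision $u_{0b}-u_{0c}=u_{k-1}-u_p$ is a nontrivial linear relation among four distinct uniformly chosen elements of $V$, which holds with probability $O(1/n)$, contributing $O(t^2 k^\alpha/n)$ in total. In every remaining case at least one endpoint is spurious, so the event entails a membership of the form bounded by Lemma \ref{int0} (probability $\le k^2/n+o(k^2/n)$) or Lemma \ref{int2} (probability $\le k^4/n^2+o(k^4/n^2)$); since there are only $O(t\,k^\alpha)$ candidate positions, each such contribution is $O(t^2 k^{\alpha+4}/n^2)$ or smaller. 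With $k=O(n^{1/2-\eps})$ and $\alpha$ chosen small enough relative to $\eps$, every contribution tends to $0$; adding the $\delta/2$ from Lemma \ref{dist} gives total failure probability below $\delta$ for $n>n(\eps,\delta)$.

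The main obstacle is precisely the confinement in the second paragraph. Without observing that the large endpoint of any colliding pair is itself a pairwise distance exceeding $d_p$ — and that, via Lemma \ref{dist}, there are only $O(\log k\cdot k^\alpha)$ such distances — the spurious cases would be summed over $\Theta(n/k)$ positions and the resulting bounds would not vanish for small $\eps$. A secondary technical point is making the $O(1/n)$ estimate for the four-term difference collision rigorous: one must condition so that the four indices are genuinely distinct (the degenerate identifications all collapse back to the original pair) and so that the remaining point is essentially uniform, which is where the uniform-random (or independent-inclusion) model for $V$ enters.
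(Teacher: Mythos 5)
Your proposal follows essentially the same route as the paper's proof: the edge $(u_{0p},u_{0,k-1})$ can only be destroyed by a colliding pair whose large endpoint is a pairwise distance of at least $u_{p,k-1}$, Lemma \ref{dist} confines such endpoints (on an event of probability $1-\delta$) to the $O(tk^\alpha)$ distances $u_{ab}$ with $a\le t$ and $b\ge k-k^\alpha$, and a union bound with the survival estimates (Lemmas \ref{int0}--\ref{survive}) finishes. Your treatment is in places slightly more careful than the paper's --- you explicitly handle the boundary pair $\{0,u_{p,k-1}\}$ and the case where both colliding endpoints lie in $U$, which the paper elides by simply bounding, via Lemma \ref{survive}, the probability that any of the $tk^\alpha$ candidate terms survives $W\cap W_2$ --- but the decomposition and the key ideas are the same.
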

\begin{proof}
For any fixed $p$ such that $1 \leq p \leq t$, note that terms $u_{0p}$ and $u_{0,k-1}$ have a pairwise distance $u_{p,k-1}$. For there to be no edge between $u_{0p}$ and $u_{0,k-1}$, another integer pair should have the same pairwise distance. For this to happen, atleast one of the integers should be greater than $u_{p,k-1}$. The only integers greater than $u_{p,k-1}$ in  $W$ can be  $\{u_{ij}: 0 \leq i \leq p-1,  j>i\}$. Hence, it is sufficient to show that none of these terms exist in $W_2$ with the desired probability. These terms can be split into two cases:
\begin{enumerate}[(i)]
\item $j \leq k-k^\alpha$: 
\\Using Lemma \ref{dist}, we see that this event can be bounded with probability $\delta$ for any $\delta>0$.
\begin{equation}
u_{ij} \leq u_{0t}+u_{pj} < u_{k-k^\alpha,k-1}+u_{p,k-k^\alpha} = u_{p,k-1}
\end{equation}
\item $k-k^\alpha< j \leq k$
\\There are $k^\alpha$ such terms for a given $p$. If all the $k^\alpha$ terms don't survive $W\cap W_2$, we can be assured of an edge between $u_{0p}$ and $u_{0,k-1}$.
\end{enumerate}

Hence, if a total of $tk^\alpha$ terms in $W$ don't survive $W\cap W_2$, we are through. The probability that none of them survive $W\cap W_2$ can be upper bounded by ${tk^\alpha}(\frac{k^2}{n}+o(\frac{k^2}{n}))$ using union bounds and Lemma \ref{survive}. This term can be made less than $\delta$ for any $\delta>0$.

\end{proof}

\begin{lem}
The probability that an integer $l$ not in $U$ belongs to $W \cap \left(\bigcap_{p=1}^{t} W_{i_p}\right)$ is less than or equal to $(\frac{k^{2(1+\eps)}}{n})^{\sqrt{t}/2}+o(\frac{k^{2(1+\eps)}}{n})^{\sqrt{t}/2}$ for $t = log(k)$
\label{intt}
\end{lem}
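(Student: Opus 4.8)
The plan is to bound the conditional probability by a union bound over the possible \emph{witnesses} in $V$ that would force $l$ into the intersection, exactly in the spirit of the proofs of Lemmas \ref{int0} and \ref{int2}, but now organized so as to expose the dependence between the $t+1$ constraints. First I would condition on the realized values of the known gaps $\{u_{0i_p}\}$ and sum over them weighted by $Pr\{u_{0i_p}=\cdot\}$, as is done in Lemma \ref{int2}, so that the prescribed differences become fixed. The event $l\in W\cap\left(\bigcap_{p=1}^{t}W_{i_p}\right)$ then says precisely that $V$ contains a pair at distance $l$ and, for each $p$, a pair at distance $l-u_{0i_p}$. Since the $u_{0i_p}$ are distinct and positive, these are $t+1$ \emph{distinct} prescribed distances. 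I would encode any realizing configuration as a graph $H$ whose vertices are the distinct elements of $V$ it touches and whose edges are the $t+1$ witnessing pairs, each edge carrying a distinct difference-label.

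The structural heart of the estimate is the following extremal observation. The graph $H$ has $t+1$ edges, and a simple graph on $m$ vertices has at most $\binom{m}{2}$ edges, so any realizing configuration must use
\begin{equation}
m \ \ge\ \tfrac{1}{2}\left(1+\sqrt{1+8(t+1)}\right)\ =\ \sqrt{2t}\,(1+o(1))
\end{equation}
distinct elements of $V$. This is exactly what breaks the naive guess: for $t=1$ the witnesses can be taken disjoint and one gets the clean $(k^2/n)^2$ of Lemma \ref{int2}, suggesting an exponent $t+1$ for the full intersection, but for large $t$ the witnesses are forced to share vertices, and they can never collapse onto fewer than $\sim\sqrt{t}$ of them. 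It is this $\binom{m}{2}\ge t+1$ constraint that converts the (false) exponent $t+1$ into the correct $\sqrt{t}$-scaling.

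Next I would run the union bound grouped by the vertex count $m$. For a fixed $m$, the probability that a given $m$-element set lies in $V$ is at most $(k/n)^{m}(1+o(1))$, and the number of admissible configurations is at most $n$ (the global translate) times a \emph{shape factor} $S(m,t)$ counting the assignments of the $t+1$ labels to pairs among the $m$ vertices, giving
\begin{equation}
Pr\{l\in W\cap\textstyle\bigcap_{p}W_{i_p}\mid l\notin U\}\ \le\ \sum_{m\ge\sqrt{2t}} n\,S(m,t)\left(\frac{k}{n}\right)^{m}.
\end{equation}
Because $(k/n)^{m}$ decays geometrically in $m$ while $S(m,t)$ grows only sub-dominantly, the sum is controlled by its smallest admissible term $m\approx\sqrt{2t}$, i.e.\ by a quantity of order $n\,(k/n)^{\sqrt{2t}}$. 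A direct comparison of exponents (using $k=O(n^{1/2-\eps})$, so $\log k\le\tfrac12\log n$) shows $n\,(k/n)^{\sqrt{2t}}$ is already below $\left(\tfrac{k^{2}}{n}\right)^{\sqrt{t}/2}$ with room to spare, and the leftover slack is what I would spend to absorb the shape factor and to relax the base from $k^{2}/n$ to $k^{2(1+\eps)}/n$, yielding the claimed bound $\left(\tfrac{k^{2(1+\eps)}}{n}\right)^{\sqrt{t}/2}+o\!\left(\tfrac{k^{2(1+\eps)}}{n}\right)^{\sqrt{t}/2}$ for $t=\log(k)$.

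The main obstacle I anticipate is quantitative control of the shape factor: at $m\approx\sqrt{2t}$ with $t=\log k$, the number of ways to assign the $t+1$ prescribed differences to pairs of $m$ labeled vertices is super-polynomial in $k$ (of order $(2\log k)^{\log k}$), and one must verify carefully that it is dominated by the probability decay rather than overwhelming it. The verification hinges on the fact that the slack above is of order $n^{\Theta(\sqrt{\log k})}=e^{\Theta((\log k)^{3/2})}$, which dwarfs $e^{\Theta(\log k\,\log\log k)}$, so the shape factor is comfortably absorbed. A cleaner route that sidesteps part of the bookkeeping is monotonicity: surviving the full intersection implies surviving any sub-intersection, so it suffices to analyze a well-chosen sub-collection of $\sim\sqrt{t}$ of the translates whose witnesses are forced to introduce fresh vertices, so that those constraints are genuinely independent and the product bound $(k^{2(1+\eps)}/n)^{\sqrt{t}/2}$ becomes transparent. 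Either route reduces to the same extremal inequality $\binom{m}{2}\ge t+1$.
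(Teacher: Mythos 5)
Your proposal follows the same skeleton as the paper's proof: condition on the realized gap values $d_p$, union-bound over witness configurations, force $\sim\sqrt{t}$ vertices via the extremal count $\binom{m}{2}\geq t+1$, and absorb the super-polynomial shape factor (the paper's $(\omega+t)^t$) into the $\eps$-slack, since $e^{\Theta(\eps(\log k)^{3/2})}$ dwarfs $e^{\Theta(\log k\,\log\log k)}$. However, there is a genuine gap in your probability accounting. You charge $(k/n)^m$ for the $m$ vertices of the witness graph $H$, but witnessing pairs may reuse the $t+1$ conditioned-on points of $V$ (those at relative positions $0,d_1,\dots,d_t$), and these lie in $V$ with conditional probability one, not $k/n$. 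Your extremal bound counts these ``free'' vertices too, so it yields no probability decay for configurations that lean on them. Concretely, if the realized gaps formed an arithmetic progression $d_p=p\delta$, then a single fresh vertex at relative position $l+\delta$ forms old--new pairs realizing every prescribed distance $l-d_p$, $0\leq p\leq t-1$, simultaneously; adding one witness pair for $l-d_t$, the whole event has probability of order $k^3/n^2=n^{-1/2-3\eps}$, which for large $n$ far exceeds the claimed $\left(\frac{k^{2(1+\eps)}}{n}\right)^{\sqrt{t}/2}=n^{-\Theta(\eps\sqrt{\log k})}$. The lemma survives only because such degenerate gap patterns are themselves improbable, and that is exactly the ingredient the paper imports from \cite{kishore2}: with probability $1-\delta$ the $d_p$ have unique pairwise differences, and under this Sidon-type property a fresh vertex $x$ covering two prescribed distances via old anchors would force $d_b-d_a=d_q-d_p$, hence $x=l$, contradicting $l\notin U$. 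That is what licenses the paper's lower bound of $\sqrt{t}$ on \emph{additional} (fresh) vertices; your bound on \emph{total} witness vertices is not a substitute, and your proposal never states or proves the distinct-differences property.

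Two smaller points. First, your embedding count ``$n$ (the global translate) times a shape factor'' is valid only when $H$ is connected; a configuration with $c$ components has $\sim n^{c}$ embeddings (e.g.\ $t+1$ pairwise disjoint witness pairs carry mass $n^{t+1}(k/n)^{2(t+1)}=(k^2/n)^{t+1}$, not $n(k/n)^{2(t+1)}$). This is repairable---every component has at least two vertices, so the number of free translates is at most $m/2$, which is precisely how the paper's $n^{\omega/2}$ factor arises, and a per-component computation still shows the connected minimal configuration dominates---but as written your union bound is not an upper bound. Second, your proposed ``cleaner route'' via a sub-collection of translates ``whose witnesses are forced to introduce fresh vertices'' presupposes exactly the missing Sidon ingredient: without it, nothing forces fresh vertices at all.
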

\begin{proof}
\begin{equation}
Pr\{l \in W \cap \left(\cap_{p=1}^{t} W_{i_p}\right) | l \notin U\} = Pr\{l, \{l-u_{0i_p}:1 \leq p \leq t\} \in W | l \notin U\} 
\end{equation}
\begin{equation}
\label{eqn}
= \sum_{d_1,..d_t} Pr\{u_{0i_p}=d_p: 1 \leq p \leq t\} Pr\{l, \{l-d_p:1 \leq p \leq t\} \in W | l \notin U,u_{0i_p}=d_p: 1 \leq p \leq t\}
\end{equation}
Using the proofs in \cite{kishore2}, we can show that the numbers $\{d_p:1 \leq p \leq t\}$ have unique pairwise distances with probability $1-\delta$ for any $\delta>0$.  Using this property, we see that atleast $\sqrt{t}$ additional vertices in $U$ (apart from $\{u_{0i_p}:1 \leq p \leq t\}$) and atmost  $2t$ additional vertices are needed to realize the $t$ pairwise distances. The probability can hence be bounded as
\begin{equation}
\leq \sum_{\omega=\sqrt{t}}^{2t}{(\omega+t)^t\left(\frac{k}{n}\right)^\omega}n^{\omega/2} \leq (\frac{k^{2(1+\eps)}}{n})^{\sqrt{t}/2}+o(\frac{k^{2(1+\eps)}}{n})^{\sqrt{t}/2}
\end{equation}
for $t=log(k)$.
\end{proof}

\begin{lem}
$U_{i_t}=W\cap \left(\bigcap_{p=1}^{t} W_{i_p}\right)$ with probability greater than $1-\delta$ for any $\delta>0$ if $t\geq log(k)$
\end{lem}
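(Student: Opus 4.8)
The inclusion $U_{i_t} \subseteq W \cap \left(\bigcap_{p=1}^{t} W_{i_p}\right)$ is already supplied by the Intersection Step, so the plan is to establish the reverse inclusion with probability at least $1-\delta$. I would take an arbitrary surviving element $l \in W \cap \left(\bigcap_{p=1}^{t} W_{i_p}\right)$ and split into two cases according to whether $l \in U$ or $l \notin U$. The first case I expect to dispose of deterministically, while the second (the spurious survivors) is where the randomness of $V$ must be used, through Lemma \ref{intt} and a union bound.

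For the deterministic case, suppose $l = u_{0j} \in U$ survives all $t$ intersections. Since every element of $W$ is a non-negative pairwise distance, membership $l \in W_{i_p} = W + u_{0i_p}$ forces $u_{0j} - u_{0i_p} = u_{i_p j} \geq 0$, i.e. $j \geq i_p$. As the indices are increasing and this must hold for every $p$, we obtain $j \geq i_t$, so $l \in U_{i_t}$. Hence no element of $U$ lying outside $U_{i_t}$ can survive, and this holds with probability one.

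For the probabilistic case, fix $l \notin U$. Lemma \ref{intt} bounds the probability that such an $l$ survives the full intersection by $\left(\frac{k^{2(1+\eps)}}{n}\right)^{\sqrt{t}/2} + o\left(\frac{k^{2(1+\eps)}}{n}\right)^{\sqrt{t}/2}$ for $t = \log k$. Since any surviving $l$ must lie in $\{0,1,\dots,n-1\}$, there are at most $n$ candidates, and a union bound gives
\begin{equation}
Pr\left\{\exists\, l \notin U : l \in W \cap \left(\bigcap_{p=1}^{t} W_{i_p}\right)\right\} \leq n\left(\frac{k^{2(1+\eps)}}{n}\right)^{\sqrt{t}/2} + o\left(n\left(\frac{k^{2(1+\eps)}}{n}\right)^{\sqrt{t}/2}\right).
\end{equation}
Plugging in $k=O(n^{1/2-\eps})$ yields $\frac{k^{2(1+\eps)}}{n} = O\left(n^{-\eps-2\eps^2}\right)=O\left(n^{-\eps'}\right)$ for $\eps'=\eps+2\eps^2>0$, so the right-hand side is $O\left(n^{\,1-\eps'\sqrt{t}/2}\right)$. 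Because $\sqrt{t}=\sqrt{\log k}\to\infty$ as $n\to\infty$, the exponent $1-\eps'\sqrt{t}/2\to-\infty$ and the bound tends to $0$; hence for $n>n(\eps,\delta)$ it is below $\delta$, which combined with the deterministic case gives $U_{i_t}=W\cap\left(\bigcap_{p=1}^{t}W_{i_p}\right)$ with probability greater than $1-\delta$.

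Finally, to cover $t \geq \log k$ rather than exactly $t=\log k$, I would note monotonicity: appending further sets $W_{i_p}$ only removes elements from the intersection and simultaneously advances $i_t$, so both sides shrink in step and the spurious-survival probability can only decrease, leaving the conclusion intact. The main obstacle is the union-bound estimate above: one must be sure that the single factor of $n$ coming from the union bound is beaten by the exponent $\sqrt{t}/2$. This is exactly where the slow but unbounded growth of $\sqrt{\log k}$, together with the strictly sub-critical ratio $k^{2(1+\eps)}/n\to 0$ guaranteed by the $O(n^{1/2-\eps})$ sparsity, does the essential work; the remaining steps are routine.
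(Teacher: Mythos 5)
Your proof is correct and follows essentially the same route as the paper: the paper bounds the expected number of spurious survivors by $n$ times the per-element bound of Lemma \ref{intt} and applies Markov's inequality, which is the same first-moment argument as your union bound over at most $n$ candidates. Your explicit handling of the deterministic cases (elements of $U$ outside $U_{i_t}$ cannot survive because $W$ contains only non-negative distances) is a detail the paper leaves implicit, but the substance of the argument is identical.
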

\begin{proof}
The expected number of terms in $W \cap \left(\bigcap_{p=1}^{t} W_{i_p}\right)$ not in $U$ can be calculated using linearity property of expectation as
\begin{equation}
E[\#l \in W\cap\left(\cap_{p=1}^{t} W_{i_p}\right)| l \notin U] \leq n\left(\left(\frac{k^{2(1+\eps)}}{n}\right)^{\sqrt{t}/2}+
o\left(\frac{k^{2(1+\eps)}}{n}\right)^{\sqrt{t}/2}\right) \leq \delta 
\end{equation}
for any $\delta>0$. Using Markov inequality, we get
\begin{equation}
Pr\{\#l \in W \cap\left(\cap_{p=1}^{t} W_{i_p}\right)| l \notin U \geq 1 \}  \leq \delta 
\end{equation}
which completes the proof.
\end{proof}

\begin{cor}
$U_2=W\cap W_2$ with probability greater than $1-\delta$ for any $\delta>0$ if $k=O(n^{1/4-\eps})$, $n>n(\eps,\delta)$.
\label{4rt}
\end{cor}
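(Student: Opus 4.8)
The plan is to follow the same expectation-plus-Markov template used in the proof of the preceding lemma (the one establishing $U_{i_t}=W\cap(\bigcap_p W_{i_p})$), but now for the single shift $W_2$, where the per-element bound is already furnished by Lemma \ref{int2}. Since the intersection step guarantees $U_2\subseteq W\cap W_2$ unconditionally, it suffices to prove that, with probability at least $1-\delta$, the set $W\cap W_2$ contains no element outside $U_2$.

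First I would split $W\cap W_2$ into elements lying in $U$ and elements lying outside $U$. For the former, I would observe that the only members of $U$ are $\{0,u_{01}\}\cup U_2$, and check that neither $0$ nor $u_{01}$ can belong to $W_2=W+u_{02}$: such membership would force $-u_{02}\in W$ (respectively $u_{01}-u_{02}\in W$), but every element of $W$ is a non-negative pairwise distance while $u_{02}>u_{01}>0$, so both values are strictly negative and hence absent from $W$. Thus $(W\cap W_2)\cap U=U_2$ holds deterministically, and the only possible ``extra'' elements are indices $l\notin U$.

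Next I would bound the number of such extra indices in expectation. By Lemma \ref{int2}, each fixed $l\notin U$ satisfies $Pr\{l\in W\cap W_2\mid l\notin U\}\le \frac{k^4}{n^2}+o(\frac{k^4}{n^2})$, and there are at most $n$ candidate values of $l$ since $w_{K-1}\le n$. Hence, bounding $Pr\{l\notin U\}\le 1$ and using linearity of expectation, the expected count of extra indices is at most $n\left(\frac{k^4}{n^2}+o(\frac{k^4}{n^2})\right)=\frac{k^4}{n}+o(\frac{k^4}{n})$. Substituting $k=O(n^{1/4-\eps})$ gives $\frac{k^4}{n}=O(n^{-4\eps})$, which falls below any prescribed $\delta$ once $n>n(\eps,\delta)$. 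A final application of Markov's inequality converts this vanishing expectation into $Pr\{\,\#\{l\notin U: l\in W\cap W_2\}\ge 1\}\le\delta$, so with probability greater than $1-\delta$ there are no extra elements and $W\cap W_2=U_2$.

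I do not anticipate a serious obstacle, as the heavy lifting is already done by Lemma \ref{int2}; the only point requiring genuine care is the deterministic verification that $0$ and $u_{01}$ drop out of the intersection, so that ``no index outside $U$'' really upgrades to ``equals $U_2$'' rather than merely ``$\subseteq U$''. The threshold $k=O(n^{1/4-\eps})$ is precisely what forces the $\frac{k^4}{n}$ expectation to zero, mirroring the way the $t$-fold intersection lemma required $k=O(n^{1/2-\eps})$ to control the quantity $\left(\frac{k^{2(1+\eps)}}{n}\right)^{\sqrt{t}/2}$.
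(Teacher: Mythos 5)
Your proposal is correct and matches the paper's intended argument: the paper states this corollary without a written proof precisely because it follows from Lemma \ref{int2} by the same linearity-of-expectation plus Markov template used for the $t$-fold intersection lemma, which is exactly what you do (expected number of spurious indices at most $n(\frac{k^4}{n^2}+o(\frac{k^4}{n^2}))=\frac{k^4}{n}+o(\frac{k^4}{n})=O(n^{-4\eps})$, then Markov). Your extra deterministic check that $0$ and $u_{01}$ cannot survive the shifted intersection is a small point the paper glosses over, but it is a refinement of the same approach rather than a different one.
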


\section{Numerical Simulations}

Consider the following example:
\begin{equation*}
W=\{\textrm{\color{black}0},     2,     \textrm{\color{black}3},    \textrm{\color{black}5},    8,    12,    14,    \textrm{\color{black}17},    30,    33,    37,    38,    49,    51,
\end{equation*}
\begin{equation*}
\hspace{1cm}52,    \textrm{\color{black}54} ,   57,    60,    68,    71,   76,    89,    90 ,   94,    97,    101,       
\end{equation*}
\begin{equation*}
\hspace{0.9cm} \textrm{103, \color{black}106},108,    109 ,   \textrm{\color{black}111} ,   \textrm{\color{black}114} ,   127   ,128,   139 ,  
\end{equation*}
\begin{equation*}
\hspace{-1cm}  141  ,     \textrm{\color{black}144}, 165  ,177 ,179,  \textrm{\color{black}182}  \}
\end{equation*}
We can infer $u_{01}=182-179=3$ from $W$. Construct $W_1=W+u_{01}$ and calculate $W \cap W_1$.
\begin{equation*}
W \cap W_1= \{\textrm{\color{black}3},      \textrm{\color{black}5},    8,     \textrm{\color{black}17},  33 ,    52,       \textrm{\color{black}54},    57,    60   ,71, 
\end{equation*}
\begin{equation*}
\hspace{2cm}   97 ,  \textrm{\color{black}106},  109,   \textrm{\color{black}111}, \textrm{\color{black}114}, \textrm{\color{black}144}, \textrm{\color{black}182}\}
\end{equation*}
Construct $G(\{0\} \cup (W \cap W_1))$ to see that
\begin{equation*}
\{0\} \leftrightarrow \{5,17\}
\end{equation*}
from which we can infer $u_{02}=5$ and $u_{03}=17$. Construct $W_2=W+u_{02}$ and $W_3=W+u_{03}$  and calculate $\left(\bigcap_{p=0}^{3} W_p\right)$
\begin{equation*}
\left(\bigcap_{p=0}^{3} W_p\right) =  \{ \textrm{\color{black}54},  \textrm{\color{black}106},   \textrm{\color{black}111}, \textrm{\color{black}114}, \textrm{\color{black}144}, \textrm{\color{black}182}\}
\end{equation*}
Calculate $U=\{u_{0p}:0 \leq p \leq 3\} \bigcup\left(\bigcap_{p=0}^{3} W_p\right)$
\begin{equation*}
U= \{\textrm{\color{black}0},\textrm{\color{black}3},\textrm{\color{black}5},\textrm{\color{black}17}, \textrm{\color{black}54},  \textrm{\color{black}106},   \textrm{\color{black}111}, \textrm{\color{black}114}, \textrm{\color{black}144}, \textrm{\color{black}182}\}
\end{equation*}

Simulations were performed by choosing k-element subsets $V$ uniformly and randomly between $\{0,1,....,n-1\}$ for different values of $n$ and $k$. Figure \ref{sim1} plots the probability of successful recovery for $n=512$ and $n=1024$.

\begin{figure}[H]
\begin{centering}
\includegraphics[scale=0.75]{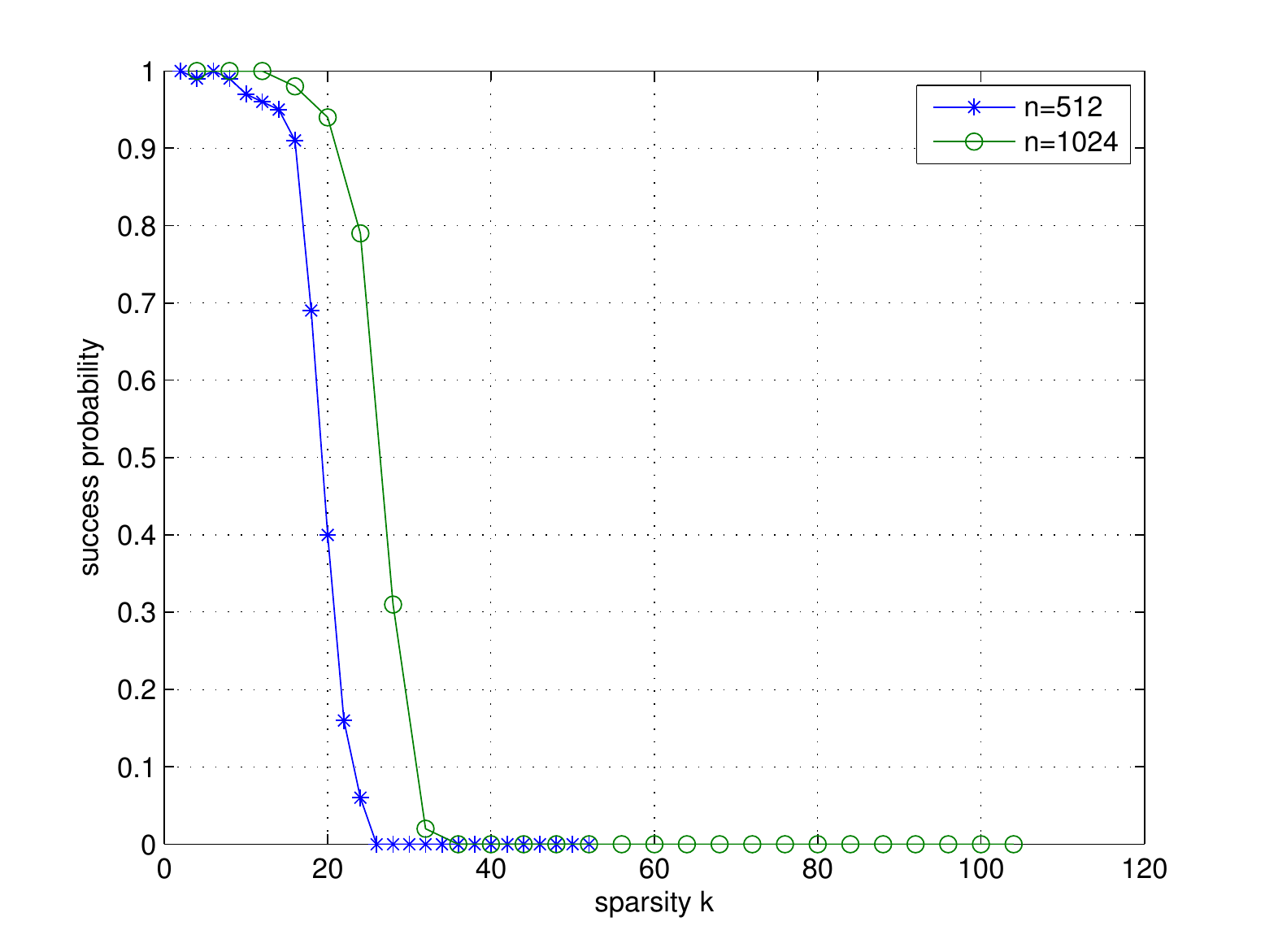}  
\caption{Probability of successful recovery}
\label{sim1}
\end{centering}
\end{figure}

\end{document}